\pdfoutput=1
\documentclass[12pt]{article}

\usepackage[T1]{fontenc}
\usepackage{lmodern}
\usepackage[a4paper,margin=1in]{geometry}
\usepackage{amsmath,amssymb,amsthm}
\usepackage{booktabs}
\usepackage{array}
\usepackage{graphicx}
\usepackage{comment}
\usepackage{microtype}
\usepackage[round,authoryear]{natbib}
\usepackage[colorlinks=true,allcolors=blue,hypertexnames=false]{hyperref}

\newtheorem{lemma}{Lemma}
\newtheorem{proposition}{Proposition}
\newtheorem{corollary}{Corollary}

\newcommand{\R}{\mathbb{R}}
\newcommand{\ind}[1]{\mathbf{1}\{#1\}}
\newcommand{\Norm}{\mathcal{N}}
\DeclareMathOperator{\Phinorm}{\Phi}
\begin{document}
\title{Defense Feedback and Global-Game Selection}
\author{Georgy Lukyanov\thanks{Corresponding author: \href{mailto:georgy.lukyanov@tse-fr.eu}{georgy.lukyanov@tse-fr.eu}.}\\ \small Toulouse School of Economics, University of Toulouse Capitole\\ \small 1 Esplanade de l'Universit\'e, 31080 Toulouse Cedex 06, France \and Yaroslav Yosifov\\ \small International College of Economics and Finance\\ \small National Research University Higher School of Economics\\ \small 11 Pokrovsky Boulevard, building T, Moscow 109028, Russia}
\date{}
\maketitle

\begin{abstract}
Global games select a unique cutoff when a player's incentive moves monotonically with private information. We show that this logic can fail when a privately informed participant can take a costly, simultaneous, and unobserved action that defends the status quo. In a currency-attack model, the defender acts only at intermediate signals. Along the traders' cutoff equation, this interval response creates an endogenous defense-loss term that vanishes at the endpoints of an activation component but can dominate the decreasing benchmark payoff margin in between. We establish a primitive region in which defense is inactive and uniqueness survives, and a finite-CDF overshoot condition under which two active-defense cutoff roots coexist with the inactive benchmark root. Because marginal indifference need not imply global incentives, we also give sufficient conditions for a genuine monotone cutoff best response. Finally, outward-rounded interval arithmetic proves that a proper-prior parameter vector has exactly three symmetric monotone-cutoff equilibria and that this topology persists locally. The result isolates a payoff-based boundary of global-game selection: hidden intervention can restore multiplicity without public signalling or a perturbation of the information hierarchy.
\end{abstract}

\noindent\textbf{Keywords:} global games; regime change; active defense; private information; equilibrium multiplicity; computer-assisted proof.

\noindent\textbf{JEL classifications:} C72; D82; F31.

\section{Introduction}

Private signals select a unique cutoff in the benchmark currency-attack global game because the marginal trader's payoff moves monotonically with her signal \citep{MorrisShin1998}.  We identify a payoff-based boundary of this selection argument.  One privately informed participant can take a costly binary action that subtracts capacity from aggregate attack pressure.  The action is simultaneous and unobserved: it changes the small traders' success probability but conveys no public information.

The crucial object is not the defender's size by itself, but the shape of his best response.  Defense is worthwhile only when fundamentals are neither so weak that finite capacity cannot save the peg nor so strong that protection is unnecessary.  The defender therefore acts on an intermediate-signal interval. Along the small traders' cutoff equation, the success probability removed by intervention vanishes at both ends of an activation component and can be large inside it.  This defense-loss term can cross the decreasing no-defense payoff margin twice before the usual inactive root is reached.

We first characterize when this feedback is impossible.  The defender's pivotal probability is always strictly below
\[
 \overline H=2\Phinorm\!\left(\frac{\sqrt{\alpha+\tau}\,\lambda}{2}\right)-1.
\]
If the defender's cost is at least \(\overline H\), he never acts and the standard cutoff equilibrium remains unique.  For lower costs, the residual decomposes as
\[
 F_\tau(b)=R_\tau(b)-L_\tau(b),
\]
where \(R_\tau\) is the strictly decreasing no-defense margin and \(L_\tau\) is the probability removed by intervention.  A finite sum of univariate Gaussian CDFs bounds \(L_\tau\) from below.  If that bound overtakes \(R_\tau\) within the left activation component, two active-defense roots coexist with the inactive benchmark root.  The construction requires an intermediate defense cost: defense must operate at pessimistic cutoffs but be inactive at the benchmark root.

Marginal indifference alone is not sufficient.  Because the probability of no defense is U-shaped in fundamentals, a trader's success probability need not be monotone in her signal.  We give conditions under which the downward jumps in state-contingent success dominate this interior variation and hence restore the global cutoff incentives.  At one proper-prior parameter vector, outward-rounded interval arithmetic then proves exactly three symmetric monotone-cutoff equilibria---two with active defense and one without---and excludes every additional cutoff root.  Strict inequalities establish local persistence while holding the prior fixed.

\citet{CorsettiEtAl2004} provide the closest large-participant benchmark.  In their simultaneous currency-attack game, a privately informed large trader's action is also hidden from the small traders, but the large trader joins the attack and uses a lower-signal cutoff; equilibrium is unique and dominance solvable under an improper uniform prior interpreted as a diffuse-prior limit. Here the large participant instead values peg survival.  His costly action is effective only in pivotal states, generating an interval response and the nonmonotone defense-loss term.  Corsetti et al.'s sequential extension concerns public action and signalling; their opposite-exposure discussion does not add costly defense capacity or derive multiplicity.

The result is distinct from multiplicity generated by observed policy actions and their informational content \citep{AngeletosEtAl2006,Lukyanov2025}. \citet{JannSchottmuller2021} study finitely many attackers and a costly, continuous, secret defense investment; their equilibrium is unique with sufficiently many potential attackers.  \citet{Mitkov2025} uses private, payoff-irrelevant sunspots to pin down a unique coordination probability without intrinsic fundamental uncertainty.  Our mechanism instead combines payoff-relevant global-game information, simultaneous hidden defense, and locally persistent cutoff multiplicity.

Nor is the claim that uniqueness is robust to arbitrary belief perturbations.  \citet{WeinsteinYildiz2007} show that refined predictions are generally sensitive to higher-order-belief perturbations.  We keep a proper Gaussian information structure fixed and locate a finite-dimensional, payoff-based boundary of selection.  The result is complete only within symmetric monotone small-trader cutoffs; asymmetric and non-cutoff equilibria remain outside its scope.

\section{Model}

There is a unit continuum of small traders \(i\in[0,1]\) and one large institutional investor, whom we call the defender.  Trader \(i\) chooses attack \(a_i\in\{0,1\}\); the defender chooses \(g\in\{0,1\}\) and has capacity \(\lambda\in(0,1)\).  Let \(M=\int_0^1a_i\,di\) be the attacking mass.  The peg collapses when
\[
 M-\lambda g\geq\theta,
\]
where \(\theta\) is the strength of fundamentals.  Normalized payoffs are
\[
 a_i\bigl(\ind{M-\lambda g\geq\theta}-c_S\bigr),
 \qquad
 \ind{M-\lambda g<\theta}-c_Lg,
\]
with \(c_L,c_S\in(0,1)\).  Interaction is simultaneous: the defender's choice is not observed before the small traders act. We identify strategies that differ only on zero-probability signal realizations.

The prior is \(\theta\sim\Norm(m_0,1/\tau)\), with \(\tau>0\).  Small trader \(i\) observes \(x_i=\theta+\varepsilon_i\), while the defender observes \(y=\theta+\eta\).  The shocks are independent, \(\varepsilon_i\sim\Norm(0,1/\beta)\) and \(\eta\sim\Norm(0,1/\alpha)\), with \(\alpha,\beta>0\).  We impose exact conditional aggregation: conditional on \(\theta\), the mass of traders with \(x_i\leq x\) equals \(\Phinorm(\sqrt\beta(x-\theta))\) almost surely.  This can be founded on a Fubini extension and the conditional exact law of large numbers \citep{Sun2006,QiaoSunZhang2016}.

We study symmetric monotone strategies in which small traders attack iff \(x_i\leq x^*\).  Their aggregate attack is then
\[
 M(\theta;x^*)=\Phinorm\!\left(\sqrt\beta(x^*-\theta)\right).
\]
Let \(a\) and \(b\) be the collapse thresholds with and without defense:
\begin{equation}\label{eq:thresholds}
 M(a;x^*)-\lambda=a,
 \qquad
 M(b;x^*)=b.
\end{equation}
Both are unique and \(a<b\).  Defense changes the outcome exactly in the pivotal states \(a<\theta\leq b\).

Let \(A=\alpha+\tau\), \(B=\beta+\tau\).  Posterior means are
\[
 m_L(y)=\frac{\alpha y+\tau m_0}{A},
 \qquad
 m_S(x)=\frac{\beta x+\tau m_0}{B},
\]
and posterior variances are \(1/A\) and \(1/B\), respectively.

\section{Benchmark uniqueness and the defense band}

Write \(\mu=(a+b)/2\) and \(h=(b-a)/2\).  Conditional on signal \(y\), the defender's gain from intervention is
\[
 \Pr(a<\theta\leq b\mid y)-c_L.
\]
The pivotal probability is symmetric and single-peaked in \(m_L(y)\), with maximum
\begin{equation}\label{eq:H}
 H_\tau(a,b)=2\Phinorm(\sqrt A h)-1.
\end{equation}
If \(H_\tau\leq c_L\), the defender never acts.  If \(H_\tau>c_L\), he defends on one signal interval \((y_L,y_H)=(\gamma-w,\gamma+w)\), where
\begin{equation}\label{eq:center}
 \gamma=\frac{A\mu-\tau m_0}{\alpha}
       =\mu+\frac{\tau(\mu-m_0)}{\alpha},
\end{equation}
and \(w>0\) is pinned down by his indifference at the two endpoints.

To parameterize a candidate cutoff, define
\[
 T_\beta(z)=z+\frac{\Phinorm^{-1}(z)}{\sqrt\beta},
 \qquad z\in(0,1).
\]
The function is strictly increasing from \((0,1)\) onto \(\R\).  Equation \eqref{eq:thresholds} is equivalent to
\begin{equation}\label{eq:parameterization}
 x^*(b)=T_\beta(b),
 \qquad
 T_\beta(p)=T_\beta(b)+\lambda,
 \qquad
 a=p-\lambda.
\end{equation}
Because \(p>b\), the pivotal-state width satisfies \(b-a<\lambda\).  As \(b\downarrow0\) or \(b\uparrow1\), \(p-b\to0\) and \(b-a\to\lambda\). It follows from \eqref{eq:H} that
\[
 H_\tau(a(b),b)<\overline H
 :=2\Phinorm\!\left(\frac{\sqrt A\lambda}{2}\right)-1
\]
for every \(b\in(0,1)\).

In the absence of defense, the marginal-trader residual is
\begin{equation}\label{eq:benchmark}
 R_\tau(b)=
 \Phinorm\!\left(\sqrt B\,[b-m_S(T_\beta(b))]\right)-c_S.
\end{equation}
If \(q=\Phinorm^{-1}(b)\), the normal index in \eqref{eq:benchmark} is
\[
 z(b)=\frac{\tau(b-m_0)-\sqrt\beta\,q}{\sqrt B},
 \qquad
 z'(b)=\frac{\tau-\sqrt\beta/\phi(q)}{\sqrt B}.
\]
Since \(\phi(q)\leq1/\sqrt{2\pi}\), the diffuse-prior condition
\begin{equation}\label{eq:diffuse}
 \tau<\sqrt{2\pi\beta}
\end{equation}
makes \(R_\tau\) strictly decreasing.  Its endpoint limits are \(1-c_S\) and \(-c_S\), so it has a unique root, denoted \(b_0\).  In the diffuse-prior limit \(\tau\downarrow0\), \(b_0=1-c_S\).

\begin{proposition}\label{prop:unique}
Suppose \eqref{eq:diffuse} holds.  If \(c_L\geq\overline H\), the defender is inactive at every monotone-cutoff candidate, and the game has a unique symmetric monotone-cutoff equilibrium, at \(b=b_0\).
\end{proposition}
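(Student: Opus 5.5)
The plan is to reduce equilibrium existence and uniqueness to the benchmark residual \(R_\tau\), in three steps: show the defender never acts, show that the marginal-trader condition collapses to \(R_\tau(b)=0\), and check that the cutoff strategy at \(b_0\) is a genuine best response.

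\emph{Step 1: the defender is inactive.} Fix any symmetric monotone cutoff \(x^*\in\R\). Because \(T_\beta\) is a bijection onto \(\R\), write \(x^*=T_\beta(b)\) for a unique \(b\in(0,1)\). By \eqref{eq:parameterization}, the collapse thresholds are then \(a(b)<b\). For any signal \(y\), the defender's gain from defending is \(\Pr(a<\theta\le b\mid y)-c_L\). The pivotal probability is at most its peak \(H_\tau(a(b),b)\) from \eqref{eq:H}, and the strict bound already established gives \(H_\tau(a(b),b)<\overline H\le c_L\). Hence defending is strictly worse for every \(y\), and \(g=0\) is the defender's unique best response, up to zero-probability signals. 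This argument uses only the bound on \(b-a\), not \eqref{eq:diffuse}.

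\emph{Step 2: only \(b_0\) can be an equilibrium cutoff.} With \(g\equiv0\), collapse occurs exactly when \(\theta\le b\). A trader with signal \(x\) therefore has expected payoff from attacking \(\Pr(\theta\le b\mid x)-c_S=\Phinorm(\sqrt B[b-m_S(x)])-c_S\). A cutoff equilibrium requires indifference at \(x=x^*=T_\beta(b)\), which is exactly \(R_\tau(b)=0\) in \eqref{eq:benchmark}. Under \eqref{eq:diffuse}, \(R_\tau\) is strictly decreasing from \(1-c_S\) to \(-c_S\), so \(b_0\) is its only root. Any other candidate cutoff therefore fails indifference: by continuity of the payoff in \(x\), traders just on one side of \(x^*\) would deviate, so no such candidate is an equilibrium.

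\emph{Step 3: \(b_0\) is a genuine equilibrium.} I expect this to be the step needing the most care, because marginal indifference alone does not deliver global incentives in general. Here it does, since the defender is inactive. The function \(x\mapsto\Phinorm(\sqrt B[b_0-m_S(x)])\) is strictly decreasing in \(x\), because \(m_S\) is increasing with slope \(\beta/B>0\). Its value at \(x^*=T_\beta(b_0)\) is \(c_S\). So attacking is strictly optimal for \(x<x^*\) and strictly suboptimal for \(x>x^*\), and the traders' cutoff is a global best response. The defender's inactivity is a best response by Step 1, which completes existence. Combining Steps 1–3 gives the proposition: the defender is inactive at every monotone-cutoff candidate, and \(b=b_0\) is the unique symmetric monotone-cutoff equilibrium.
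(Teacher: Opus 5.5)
Your proposal is correct and follows the paper's proof essentially step for step. The strict bound \(H_\tau(a(b),b)<\overline H\le c_L\) makes defense unprofitable at every candidate, so the cutoff residual reduces to the strictly decreasing \(R_\tau\) with unique root \(b_0\), and the strictly decreasing normal-CDF success probability turns marginal indifference into a global best response; you simply spell out the bijection through \(T_\beta\) and the deviation argument at non-roots more explicitly than the paper does.
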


\begin{proof}
The bound on \(H_\tau\) makes defense strictly unprofitable for every candidate.  The cutoff residual is therefore \(R_\tau\), which crosses zero once.  Without defense, a trader's success probability is a normal CDF that is strictly decreasing in her signal, so marginal indifference supplies the full cutoff incentives.
\end{proof}

\section{The defense-loss mechanism}

When defense is active, let
\[
 D_\tau(\theta;b)=
 \Phinorm\!\left(\sqrt\alpha(y_H-\theta)\right)
 -\Phinorm\!\left(\sqrt\alpha(y_L-\theta)\right)
\]
be its probability conditional on the state.  The marginal trader's posterior density is \(f_B(u)=\sqrt B\phi(\sqrt B u)\), centered at \(m(b)=m_S(T_\beta(b))\).  Define the intervention loss
\begin{equation}\label{eq:loss}
 L_\tau(b)=\int_{a(b)}^bD_\tau(\theta;b)
 f_B(\theta-m(b))\,d\theta,
\end{equation}
and set \(L_\tau(b)=0\) when defense is inactive.  The cutoff equation is
\begin{equation}\label{eq:residual}
 F_\tau(b)=R_\tau(b)-L_\tau(b)=0.
\end{equation}

The intervention loss admits a finite-CDF lower bound.  For an active candidate, put \(C=B+\alpha\), \(\kappa=\alpha B/(\alpha+B)\), and
\[
 \widehat m(y)=\frac{Bm(b)+\alpha y}{C}.
\]
Define
\[
 Q_b(y)=\Phinorm\!\left(\sqrt C\,[b-\widehat m(y)]\right)
 -\Phinorm\!\left(\sqrt C\,[a-\widehat m(y)]\right)
\]
and let
\[
 G_b(y)=\Phinorm\!\left(\sqrt\kappa\,[y-m(b)]\right)
\]
be the conditional CDF of the defender's signal given \(x=x^*(b)\).

\begin{lemma}\label{lem:lowerbound}
For every finite partition

\[
 y_L=s_0<s_1<\cdots<s_K=y_H,
\]
the intervention loss satisfies
\begin{equation}\label{eq:lowerbound}
 L_\tau(b)\geq \underline L_\tau(b;\mathbf s)
 :=\sum_{j=1}^K
 \min\{Q_b(s_{j-1}),Q_b(s_j)\}
 \,[G_b(s_j)-G_b(s_{j-1})].
\end{equation}
The bound increases weakly under partition refinement and converges to \(L_\tau(b)\) as the mesh tends to zero.
\end{lemma}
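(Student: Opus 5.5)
The plan is to rewrite \(L_\tau(b)\) as an expectation over the defender's signal \(y\), conditional on the marginal trader's signal \(x=x^*(b)\). Given \(x\), the marginal trader's posterior on \(\theta\) is \(\Norm(m(b),1/B)\), and \(y=\theta+\eta\) with \(\eta\) independent of \(\theta\) and of \(x\). Since \(D_\tau(\theta;b)=\Pr(y_L<y<y_H\mid\theta)\), Fubini gives
\[
 L_\tau(b)=\Pr\bigl(a<\theta\leq b,\;y_L<y<y_H \mid x\bigr)
 =\int_{y_L}^{y_H}\Pr(a<\theta\leq b\mid x,y)\,dG_b(y).
\]
Standard Gaussian updating then does the rest. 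The marginal of \(y\) given \(x\) is normal with mean \(m(b)\) and variance \(1/B+1/\alpha=1/\kappa\), so its CDF is \(G_b\). The posterior of \(\theta\) given \((x,y)\) has precision \(C=B+\alpha\) and mean \(\widehat m(y)\). Hence \(\Pr(a<\theta\leq b\mid x,y)=Q_b(y)\), and
\[
 L_\tau(b)=\int_{y_L}^{y_H}Q_b(y)\,dG_b(y).
\]

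The key step is that \(Q_b\) is quasi-concave in \(y\). It is the probability that a normal variable with fixed variance \(1/C\) and mean \(\widehat m(y)\) falls in the fixed interval \((a,b]\). As a function of the mean, that probability is symmetric and single-peaked about \(\mu=(a+b)/2\), exactly as for the defender's pivotal probability in Section 3. Because \(\widehat m(y)\) is affine and strictly increasing in \(y\), \(Q_b\) is unimodal in \(y\). On each cell \([s_{j-1},s_j]\), a unimodal function attains its minimum at an endpoint, so \(Q_b(y)\geq\min\{Q_b(s_{j-1}),Q_b(s_j)\}\) throughout the cell. Integrating this against the positive measure \(dG_b\) on each cell and summing over \(j\) gives \eqref{eq:lowerbound}.

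Monotonicity under refinement follows from the same argument. Inserting a point \(t\) into a cell \([s_{j-1},s_j]\) replaces one term by two. Each new minimum, \(\min\{Q_b(s_{j-1}),Q_b(t)\}\) and \(\min\{Q_b(t),Q_b(s_j)\}\), is at least the old cell minimum \(\min\{Q_b(s_{j-1}),Q_b(s_j)\}\), since by unimodality \(Q_b(t)\) is at least the smaller endpoint value. The \(G_b\)-weights add up to the old weight. Any refinement is a finite sequence of such insertions.

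For convergence, \(Q_b\) is continuous on the compact interval \([y_L,y_H]\), hence uniformly continuous with some modulus \(\omega\). The gap between \(L_\tau(b)\) and the bound is at most \(\omega(\text{mesh})\,[G_b(y_H)-G_b(y_L)]\), which tends to zero with the mesh.

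I expect the only real obstacle to be the unimodality step. It must be justified cleanly: the derivative of \(\Phinorm(\sqrt C(b-m))-\Phinorm(\sqrt C(a-m))\) in \(m\) is \(\sqrt C[\phi(\sqrt C(a-m))-\phi(\sqrt C(b-m))]\), which is positive for \(m<\mu\) and negative for \(m>\mu\). Composing with the increasing affine map \(y\mapsto\widehat m(y)\) preserves this sign pattern. The endpoint-minimum claim uses only this sign pattern, so no concavity is needed.
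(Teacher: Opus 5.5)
Your proposal is correct and follows essentially the same route as the paper: you use Fubini to write \(L_\tau(b)=\int_{y_L}^{y_H}Q_b\,dG_b\), unimodality of \(Q_b\) in \(\widehat m(y)\) to put each cell minimum at an endpoint, and a modulus-of-continuity bound for convergence. The differences are cosmetic: you check unimodality through the sign of the derivative in the posterior mean rather than the paper's distance-from-\(\mu\) parameterization, and you prove refinement monotonicity by direct point insertion instead of identifying the sum as a lower Darboux--Stieltjes sum.
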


\begin{proof}
Conditional on the marginal small trader's signal, \(\theta\sim\Norm(m(b),1/B)\).  Since \(y=\theta+\eta\), \(y\sim\Norm(m(b),1/\kappa)\), and conditioning once more gives
\[
 \theta\mid y,x^*(b)\sim\Norm(\widehat m(y),1/C).
\]
Consequently, Fubini's theorem gives
\[
 L_\tau(b)
 =\Pr(a<\theta\leq b,\ y_L<y<y_H\mid x^*(b))
 =\int_{y_L}^{y_H}Q_b(y)\,dG_b(y).
\]
The function \(Q_b\) is symmetric and single-peaked as \(\widehat m(y)\) passes through \(\mu=(a+b)/2\): at distance \(t\geq0\) from \(\mu\), it equals
\[
 \Phinorm(\sqrt C[h-t])+\Phinorm(\sqrt C[h+t])-1,
\]
whose derivative is strictly negative for \(t>0\).  Its minimum on each partition cell is therefore attained at an endpoint.  Hence \(\underline L_\tau\) is the lower Darboux--Stieltjes sum of the continuous function \(Q_b\) with respect to the probability measure induced by \(G_b\). Refinement can only increase the sum.  If \(\omega_{Q_b}\) is the modulus of continuity and \(\|\mathbf s\|\) is the mesh, then
\[
 0\leq L_\tau(b)-\underline L_\tau(b;\mathbf s)
 \leq \omega_{Q_b}(\|\mathbf s\|)
 [G_b(y_H)-G_b(y_L)]\longrightarrow0.
\]
\end{proof}

The decomposition makes the feedback precise.  The benchmark margin \(R_\tau\) falls monotonically.  Intervention subtracts a nonnegative loss. At an activation boundary the defense band shrinks continuously to a point, so \(L_\tau\to0\).  Moreover, \(L_\tau(b)\to0\) as \(b\downarrow0\): the pivotal interval converges to \([-\lambda,0]\), while the marginal trader's posterior mean tends to \(-\infty\).  Thus the loss can be large only between the two ends of the left active-defense component.

\begin{proposition}\label{prop:hump}
Suppose \eqref{eq:diffuse} holds, \(c_L<\overline H\), and \(H_\tau(a(b_0),b_0)<c_L\).  Let \((0,\bar b)\) be the connected component of \(\{b:H_\tau(a(b),b)>c_L\}\) whose closure contains zero.  Then \(\bar b<b_0\) and \(L_\tau(0+)=L_\tau(\bar b)=0\).  Choose \(\widetilde b\in(0,\bar b)\) and a finite partition \(\mathbf s\) of its defense band.  The following finite-CDF condition is sufficient for two active-defense roots in addition to \(b_0\):
\begin{equation}\label{eq:overshoot}
 c_S>
 \Phinorm\!\left(\sqrt B\,[\widetilde b-m(\widetilde b)]\right)
 -\underline L_\tau(\widetilde b;\mathbf s).
\end{equation}
Hence \eqref{eq:residual} has at least three distinct roots.  If one selected root in each of \((0,\widetilde b)\) and \((\widetilde b,\bar b)\) satisfies the conditions of Lemma~\ref{lem:global}, those roots and \(b_0\) are symmetric monotone-cutoff Bayesian equilibria.
\end{proposition}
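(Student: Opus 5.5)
The argument is an intermediate-value argument applied to the continuous residual \(F_\tau=R_\tau-L_\tau\) on the left activation component \((0,\bar b)\), together with a separate check that \(b_0\) stays a root.

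\emph{Step 1: locating \(\bar b\).} First I will show that \(\bar b<b_0\). The map \(b\mapsto H_\tau(a(b),b)\) is continuous, because \(a(b)=p(b)-\lambda\) and \(p(b)=T_\beta^{-1}(T_\beta(b)+\lambda)\) are continuous. As \(b\downarrow0\), \(b-a\to\lambda\), so \(H_\tau\to\overline H>c_L\). Hence the component \((0,\bar b)\) is nonempty. By hypothesis \(H_\tau(a(b_0),b_0)<c_L\), so \(b_0\) does not lie in the set \(\{H_\tau>c_L\}\), and the component through \(0\) therefore ends at some \(\bar b\le b_0\). Continuity gives \(H_\tau(a(\bar b),\bar b)=c_L\), and hence \(\bar b\neq b_0\). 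Since \(b_0\) is outside the active set, \(L_\tau(b_0)=0\) and \(F_\tau(b_0)=R_\tau(b_0)=0\), so \(b_0\) is an inactive root.

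\emph{Step 2: boundary values and continuity of \(L_\tau\).} At \(\bar b\) the pivotal probability peaks at exactly \(c_L\). I will show that the half-width \(w(b)\) of the defense band tends to \(0\) as \(b\uparrow\bar b\). The endpoints \(y_L,y_H\) solve \(\Pr(a<\theta\le b\mid y)=c_L\), and the peak of this function tends to \(c_L\) while it stays strictly single-peaked, so its two level-set points merge. Then \(D_\tau(\cdot;b)\to0\) uniformly, and so \(L_\tau(\bar b)=0\).

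As \(b\downarrow0\), \([a,b]\to[-\lambda,0]\). The posterior mean \(m(b)=m_S(T_\beta(b))\to-\infty\) because \(T_\beta(b)\to-\infty\). Since \(D_\tau\le1\), \(L_\tau(b)\le\Pr(a<\theta\le b\mid x^*(b))\to0\).

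Continuity of \(L_\tau\) on \((0,\bar b)\) follows from the formula in \eqref{eq:loss}: the integrand depends continuously on \((a,b,y_L,y_H,m)\), and \(y_L,y_H\) depend continuously on \(b\). This uses the implicit function theorem, since the slope of the pivotal probability is nonzero at interior level points. Together with continuity of \(R_\tau\), this gives \(F_\tau(0+)=R_\tau(0+)=1-c_S>0\) and \(F_\tau(\bar b)=R_\tau(\bar b)>R_\tau(b_0)=0\), because \(R_\tau\) is strictly decreasing under \eqref{eq:diffuse}.

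\emph{Step 3: the overshoot.} By Lemma~\ref{lem:lowerbound}, \(L_\tau(\widetilde b)\ge\underline L_\tau(\widetilde b;\mathbf s)\). Therefore
\[
F_\tau(\widetilde b)\le \Phinorm\!\left(\sqrt B\,[\widetilde b-m(\widetilde b)]\right)-c_S-\underline L_\tau(\widetilde b;\mathbf s)<0
\]
by \eqref{eq:overshoot}. The intermediate value theorem on \((0,\widetilde b)\) and on \((\widetilde b,\bar b)\) then yields two roots, both inside the active component. Both are distinct from \(b_0\) because \(\bar b<b_0\). This gives at least three roots of \eqref{eq:residual}.

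\emph{Step 4: the equilibrium claim.} Each root satisfies marginal indifference by construction of \(F_\tau\). The defender's band is his best response to the cutoff by Section~3. If Lemma~\ref{lem:global} holds at the two selected active roots, traders with lower signals strictly prefer to attack and those with higher signals strictly prefer not to, so those roots are equilibria. At \(b_0\) the equilibrium property follows exactly as in Proposition~\ref{prop:unique}: the defender is strictly inactive, and the success probability is a normal CDF decreasing in the signal.

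The main obstacle is Step 2, specifically the behaviour of the band \(w(b)\) near the activation boundary and the continuity of \(L_\tau\) there. I would handle it by writing the pivotal probability as the function of \(m_L(y)-\mu\) from the proof of Lemma~\ref{lem:lowerbound}, which is strictly decreasing in distance, and then solving for the distance explicitly as a continuous function of \((h,c_L)\) that vanishes when \(2\Phinorm(\sqrt A h)-1=c_L\).
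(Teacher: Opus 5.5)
Your proposal is correct and follows the paper's own argument. You get $\bar b<b_0$ from the strict inequality at $b_0$, use $F_\tau(0+)=1-c_S>0$ and $F_\tau(\bar b)=R_\tau(\bar b)>0$ from the vanishing loss at both ends of the left component, and apply the intermediate value theorem on either side of $\widetilde b$ once Lemma~\ref{lem:lowerbound} turns \eqref{eq:overshoot} into $F_\tau(\widetilde b)<0$. Your version is simply more explicit than the paper's on three points: why $H_\tau(a(\bar b),\bar b)=c_L$, why the band half-width shrinks to zero so that $L_\tau$ is continuous up to $\bar b$, and why $b_0$ itself is an equilibrium as in Proposition~\ref{prop:unique}.
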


\begin{proof}
Because \(c_L<\overline H\), defense is active for all sufficiently small \(b\).  The strict inequality at \(b_0\) places a neighborhood of the benchmark root outside that active component, hence \(\bar b<b_0\).  By continuity at activation, \(L_\tau(\bar b)=0\), while \(R_\tau(\bar b)>0\).  At the other end, \(F_\tau(b)\to1-c_S>0\) as \(b\downarrow0\).  Lemma~\ref{lem:lowerbound} and condition \eqref{eq:overshoot} give \(L_\tau(\widetilde b)>R_\tau(\widetilde b)\), so \(F_\tau(\widetilde b)<0\).  The intermediate value theorem therefore supplies one active root on each side of \(\widetilde b\).  Since defense is inactive at \(b_0\), that benchmark root is a third.  Lemma~\ref{lem:global} converts each active marginal root into a global cutoff best response whenever its hypotheses hold.
\end{proof}

\begin{corollary}\label{cor:costwindow}
Suppose \eqref{eq:diffuse} holds, and define \(c_-=H_\tau(a(b_0),b_0)\) and \(c_+=\overline H\).  The coexistence route in Proposition~\ref{prop:hump}, with two active-defense roots and an inactive benchmark root, requires \(c_L\in(c_-,c_+)\).  Within this interval, \eqref{eq:overshoot} together with the global-incentive conditions is sufficient for at least three symmetric monotone-cutoff equilibria.
\end{corollary}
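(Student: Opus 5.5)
The corollary has two parts: a necessity claim (the coexistence route forces \(c_L\in(c_-,c_+)\)) and a sufficiency claim (inside that window, the stated conditions suffice). I would prove the necessity part by showing that each endpoint inequality is required by one ingredient of the route in Proposition~\ref{prop:hump}. The sufficiency part is then a restatement of that proposition.

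\emph{Upper bound.} The route needs active-defense roots. An active root requires some candidate \(b\in(0,1)\) with \(H_\tau(a(b),b)>c_L\). Section~3 shows that \(H_\tau(a(b),b)<\overline H\) for every \(b\in(0,1)\). So if \(c_L\geq \overline H=c_+\), defense is inactive everywhere, and Proposition~\ref{prop:unique} then gives uniqueness at \(b_0\). Hence \(c_L<c_+\).

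\emph{Lower bound.} The route also needs the benchmark root \(b_0\) to be an inactive-defense root. By definition of the defense band, activity at \(b_0\) is decided by the sign of \(H_\tau(a(b_0),b_0)-c_L=c_- - c_L\).
\begin{itemize}
\item If \(c_L<c_-\), the defender is active at \(b_0\), so \(L_\tau(b_0)>0\) and \(F_\tau(b_0)=-L_\tau(b_0)<0\). Thus \(b_0\) is not a root of \eqref{eq:residual}, and no inactive benchmark root exists.
\item If \(c_L=c_-\), the defender is exactly indifferent at the peak signal. The band degenerates and \(L_\tau(b_0)=0\), but \(b_0\) sits on the boundary of the activation set. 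It is then not "inactive at the benchmark root" in the strict sense that Proposition~\ref{prop:hump} uses: that proposition's hypothesis is the strict inequality \(H_\tau(a(b_0),b_0)<c_L\), which is what places a whole neighborhood of \(b_0\) outside the left active component and yields \(\bar b<b_0\).
\end{itemize}
So the route requires \(c_L>c_-\).

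One consistency check is needed: the window is nonempty. Since \(c_-=H_\tau(a(b_0),b_0)<\overline H=c_+\) by the same Section~3 bound, the interval \((c_-,c_+)\) is nonempty.

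\emph{Sufficiency.} Take \(c_L\in(c_-,c_+)\). Then \(c_L<\overline H\) and \(H_\tau(a(b_0),b_0)<c_L\), which are exactly the hypotheses of Proposition~\ref{prop:hump} together with \eqref{eq:diffuse}. That proposition gives \(\bar b<b_0\). Under \eqref{eq:overshoot}, it also gives two active roots, one in \((0,\widetilde b)\) and one in \((\widetilde b,\bar b)\), plus the root \(b_0\). When the selected active roots satisfy Lemma~\ref{lem:global}, all three are symmetric monotone-cutoff equilibria. The root \(b_0\) needs no separate global-incentive check beyond the argument of Proposition~\ref{prop:unique}: defense is inactive at \(b_0\), so success probability is monotone in the trader's signal.

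The main obstacle is the boundary case \(c_L=c_-\) in the lower bound, where one must argue it falls outside the route rather than contradicting it. I would settle this by appealing to the strict hypothesis of Proposition~\ref{prop:hump}. Otherwise every step follows directly from earlier results.
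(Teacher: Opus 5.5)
Your proof is correct and follows the paper's own argument. The upper endpoint comes from \(H_\tau<\overline H\) via Proposition~\ref{prop:unique}, the lower endpoint from the requirement that the benchmark root be strictly inactive, and sufficiency is read off from Proposition~\ref{prop:hump} and Lemma~\ref{lem:global}. Your added observations are valid refinements that the paper leaves implicit: for \(c_L<c_-\), \(F_\tau(b_0)=-L_\tau(b_0)<0\) leaves no inactive root at all, and the window is nonempty because \(c_-<\overline H\).
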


\begin{proof}
If \(c_L\geq c_+\), Proposition~\ref{prop:unique} makes defense inactive at every candidate.  If \(c_L\leq c_-\), the benchmark root is not strictly inactive.  For \(c_L\in(c_-,c_+)\), the remaining sufficiency statements are exactly Proposition~\ref{prop:hump} and Lemma~\ref{lem:global}.
\end{proof}

The interval is not by itself sufficient: the loss must still overshoot the benchmark margin.  Its role is to isolate the economic content of the construction.  Defense must be cheap enough to operate at pessimistic cutoffs but costly enough to disappear at the benchmark cutoff.

The argument concerns the left active component.  It does not assert that defense remains inactive at every larger cutoff; an active component may reappear near \(b=1\).  The certified example below excludes additional roots there by a separate right-tail bound.

It remains to check full incentives.  With active defense, set \(n(\theta)=1-D_\tau(\theta)\).  It is symmetric and U-shaped about \(\gamma\). Define
\[
 \ell=\gamma-a,\qquad r=b-\gamma,\qquad
 n_a=n(a),\quad n_b=n(b),\quad n_0=n(\gamma).
\]
Equation \eqref{eq:center} gives
\begin{equation*}\tag{O}\label{eq:ordering}
 \gamma-\mu=\frac{\tau(\mu-m_0)}{\alpha},
 \qquad
 r-\ell=\frac{2\tau(m_0-\mu)}{\alpha}.
\end{equation*}

\begin{lemma}\label{lem:global}
Fix an active-defense candidate.  Suppose \(a<\gamma<b\), \(0<\ell<r\), and
\begin{align}
 n_b-n(\gamma+\ell)&<1-n_a, \tag{L}\label{eq:L}\\
 n_b-n_0&<(1-n_a)e^{-2Bh^2}+n_be^{-Br^2/2}. \tag{GM}\label{eq:GM}
\end{align}
Then a small trader's success probability is strictly decreasing in her signal.  Consequently, any zero of \(F_\tau\) satisfying these conditions is a global cutoff best response.
\end{lemma}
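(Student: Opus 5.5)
The plan is to write the small trader's success probability as a Gaussian smoothing of a state-contingent success function, differentiate with respect to the posterior mean, and sign the derivative by splitting into cases according to where that mean lies relative to \(\gamma\) and \(b\).

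\textbf{Setup.} Given the candidate (fixed \(a,b,y_L,y_H\)), the peg falls with certainty for \(\theta\leq a\). It falls with probability \(n(\theta)\) for \(a<\theta\leq b\), and never for \(\theta>b\). Put \(s(\theta)=\ind{\theta\leq a}+n(\theta)\ind{a<\theta\leq b}\). A trader with signal \(x\) has posterior \(\theta\sim\Norm(m,1/B)\) with \(m=m_S(x)\) strictly increasing in \(x\). Hence her success probability is
\[
S(m)=\int s(\theta)f_B(\theta-m)\,d\theta ,
\]
and it suffices to show \(S'(m)<0\) for every \(m\). Integrating by parts against the bounded-variation function \(s\) gives \(S'(m)=\int f_B(\theta-m)\,ds(\theta)\). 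Here \(ds\) consists of a jump \(-(1-n_a)\) at \(a\), a jump \(-n_b\) at \(b\), and the density \(n'(\theta)\) on \((a,b)\). Since \(n\) is symmetric and U-shaped about \(\gamma\), \(n'<0\) on \((a,\gamma)\) and \(n'>0\) on \((\gamma,b)\). Because \(\ell<r\), the interval \((\gamma,b)\) contains the mirror image \((\gamma,\gamma+\ell)\) of \((a,\gamma)\) plus an excess piece \((\gamma+\ell,b)\).

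\textbf{Case \(m\leq\gamma\).} Pair \(\gamma-t\) with \(\gamma+t\) for \(t\in(0,\ell)\). Using \(n'(\gamma+t)=-n'(\gamma-t)\), the combined contribution is
\[
\int_0^\ell n'(\gamma+t)\bigl[f_B(\gamma+t-m)-f_B(\gamma-t-m)\bigr]dt\leq0,
\]
because \(\gamma-t\) is weakly closer to \(m\) than \(\gamma+t\). On \((\gamma+\ell,b)\) every point is at distance at least \(\gamma+\ell-m\geq\gamma-m+\ell=m\)'s distance to \(a\). Therefore \(f_B(\theta-m)\leq f_B(a-m)\) there, and the excess piece contributes at most \((n_b-n(\gamma+\ell))f_B(a-m)\). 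Condition \eqref{eq:L} then makes this strictly smaller than the jump term \((1-n_a)f_B(a-m)\), so \(S'(m)<0\).

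\textbf{Case \(\gamma<m\leq b\).} Bound the whole positive part crudely by \((n_b-n_0)f_B(0)\). Since \(|a-m|\leq 2h\) and \(|b-m|\leq r\), the two jumps contribute at least
\[
f_B(0)\bigl[(1-n_a)e^{-2Bh^2}+n_be^{-Br^2/2}\bigr]
\]
in absolute value, and \eqref{eq:GM} gives the strict sign.

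\textbf{Case \(m>b\).} On \((\gamma,b)\), \(f_B(\theta-m)\leq f_B(b-m)\). So the positive part is at most \((n_b-n_0)f_B(b-m)<n_bf_B(b-m)\), which is dominated by the jump at \(b\) alone; the negative part on \((a,\gamma)\) only helps.

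\textbf{Conclusion.} Thus \(S\) is strictly decreasing in \(m\), hence in \(x\). At a zero of \(F_\tau\) the marginal trader has \(S=c_S\), so \(S(x)-c_S\) is positive for \(x<x^*\) and negative for \(x>x^*\). Since the other traders and the defender keep the candidate strategies, the cutoff \(x^*\) is a global best response.

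The main obstacle is the bookkeeping in the first case. One must check that the pairing argument and the distance comparison are valid at the boundary \(m=\gamma\), which is exactly where \eqref{eq:L} is needed. One must also confirm that the integration by parts is legitimate, which holds because \(n\) is smooth on \((a,b)\) and \(s\) has only the two jumps at \(a\) and \(b\).
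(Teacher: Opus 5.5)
Your proof is correct and follows the paper's argument essentially step for step: the same derivative formula obtained by integrating by parts, the same three cases in $m$, the same pairing of $\gamma\pm t$ with the excess piece on $(\gamma+\ell,b)$ controlled by (L), the same crude $f_B(0)$ bound combined with (GM), and the same domination of the positive variation by the upper jump when $m>b$. One small slip: in the first case the middle expression of your distance chain should be $|\gamma-\ell-m|=|a-m|$, not $\gamma-m+\ell$; the inequality you need, $\gamma+\ell-m\geq|a-m|$, holds because $\gamma-m\geq0$ and $\ell>0$.
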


\begin{proof}
Let \(S(x;x^*)\) be the posterior probability that an attack succeeds and write \(m=m_S(x)\).  Then
\begin{equation}\label{eq:success}
 S(x;x^*)=\Phinorm\!\left(\sqrt B(a-m)\right)
 +\int_a^b n(\theta)f_B(\theta-m)\,d\theta .
\end{equation}
Differentiation and integration by parts give
\begin{equation*}\tag{D}\label{eq:derivative}
 \frac{dS}{dm}=-(1-n_a)f_B(a-m)-n_bf_B(b-m)
 +\int_a^b n'(\theta)f_B(\theta-m)\,d\theta .
\end{equation*}
The first two terms are the downward jumps in state-contingent success at \(a\) and \(b\).

If \(m\leq\gamma\), pair \(\gamma-z\) with \(\gamma+z\) for \(z\in[0,\ell]\).  Antisymmetry of \(n'\) and the Gaussian density ordering make every paired contribution nonpositive.  The only remaining positive variation lies on \([\gamma+\ell,b]\).  Its contribution is at most \([n_b-n(\gamma+\ell)]f_B(a-m)\), which the lower jump dominates by \eqref{eq:L}.

If \(\gamma\leq m\leq b\), discard the negative variation on \([a,\gamma]\).  The positive variation is at most \((n_b-n_0)f_B(0)\).  Since \(|a-m|\leq2h\) and \(|b-m|\leq r\), condition \eqref{eq:GM} makes the two downward jumps larger.  Finally, if \(m\geq b\), positive interior variation is at most \((n_b-n_0)f_B(b-m)\).  After the upper jump, \eqref{eq:derivative} is at most \(-(1-n_a)f_B(a-m)-n_0f_B(b-m)<0\).  Thus \(S\) is strictly decreasing.
\end{proof}

\section{A certified local exact count}

Consider the proper prior \((\tau,m_0)=(0.001,0)\), so \(\theta\sim\Norm(0,1000)\), and the parameter vector
\begin{equation}\label{eq:parameters}
 (\alpha,\beta,\lambda,c_L,c_S)=(20,8,0.8,0.71,0.919).
\end{equation}
For these values, \(c_-\approx0.675953\) and \(c_+\approx0.926369\), placing the defense cost inside the intermediate region identified above.  The left active component ends at \(\bar b\approx0.048960625\), whereas the inactive benchmark root is \(b_0\approx0.080991182\).  At \(\widetilde b=0.020\), let \(\underline L_{\tau,2}\) denote the bound generated by the partition \((y_L,\gamma,y_H)\).  The certificate gives
\[
 \underline L_{\tau,2}>0.0758838764
 \quad\text{and}\quad
 R_\tau(\widetilde b)<0.0609941273.
\]
Thus the finite-CDF overshoot margin exceeds \(0.0148897492\), without numerical quadrature.  Direct integration gives \(L_\tau(\widetilde b)\approx0.08236111\) as a corroborating point calculation.

\begin{figure}[t]
\centering
\includegraphics[width=\textwidth]{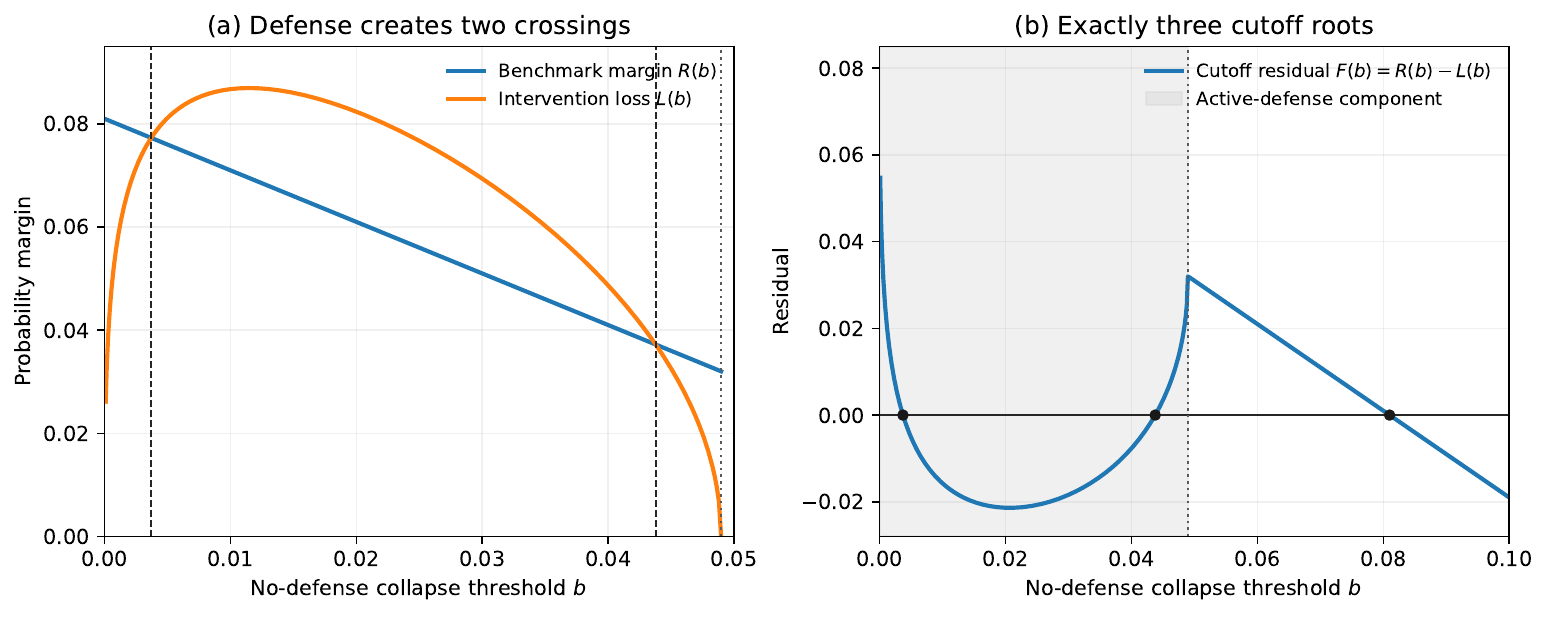}
\caption{The defense-loss mechanism at \eqref{eq:parameters}. Curves are high-precision point evaluations for exposition; the three roots and their exhaustion are established by the outward-rounded interval certificates. In panel (a), the dashed lines mark the two active roots and the dotted line the end of the left active-defense component.}
\label{fig:mechanism}
\end{figure}

\begin{proposition}\label{prop:main}
At \eqref{eq:parameters}, the game has exactly three distinct symmetric equilibrium cutoff configurations, corresponding to pure-strategy Bayesian equilibria in monotone small-trader cutoffs under the null-type convention. Two feature intermediate-signal defense and one features no defense.  Holding \((\tau,m_0)=(0.001,0)\) fixed, this conclusion persists on a nonempty open neighborhood of the five-dimensional vector in \eqref{eq:parameters}.
\end{proposition}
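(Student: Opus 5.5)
The plan is a computer-assisted argument built on the one-dimensional parameterization \eqref{eq:parameterization}. Every symmetric monotone-cutoff candidate corresponds to a unique \(b\in(0,1)\), and a candidate is an equilibrium exactly when two things hold: \(F_\tau(b)=0\), and the marginal condition upgrades to a global best response. I would therefore organize the proof in three parts. Part (i) certifies existence of exactly one root of \(F_\tau\) in each of three disjoint enclosures. Part (ii) excludes roots everywhere else on \((0,1)\). Part (iii) verifies global incentives at each root and checks that every estimate is a strict inequality, so the count is stable under small perturbations of \((\alpha,\beta,\lambda,c_L,c_S)\).

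\textbf{Existence and local uniqueness.} Check \eqref{eq:diffuse}: \(0.001<\sqrt{16\pi}\). Then bound \(c_-<0.71<c_+\) with outward rounding, so Proposition~\ref{prop:unique} is not triggered and Corollary~\ref{cor:costwindow} applies.

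The inactive root \(b_0\) comes from a sign change of \(R_\tau\) on a small interval, with defense certified inactive there via \(H_\tau(a(b),b)<c_L\). Strict monotonicity of \(R_\tau\) gives uniqueness.

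For the two active roots I use Proposition~\ref{prop:hump} with \(\widetilde b=0.020\), where the certificate already gives \(F_\tau(\widetilde b)<0\). Next I enclose \(y_L,y_H\) by interval Newton on the defender's indifference equations. I then certify sign changes of \(F_\tau\) on tight boxes, one in \((0,\widetilde b)\) and one in \((\widetilde b,\bar b)\). Upper bounds on \(L_\tau\) come from upper Darboux sums, which is the companion to Lemma~\ref{lem:lowerbound}. Uniqueness inside each box follows from an interval enclosure of \(F_\tau'\) that excludes zero, obtained by differentiating through the implicit functions \(a(b),y_L(b),y_H(b)\).

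\textbf{Exhaustion.} I cover \((0,1)\) by finitely many subintervals and certify \(F_\tau\neq0\) on each subinterval away from the root boxes.
\begin{itemize}
\item On subintervals where defense is certified inactive, \(F_\tau=R_\tau\), whose sign is monotone.
\item On active subintervals, I use \(R_\tau-\overline L\) or \(R_\tau-\underline L\) with interval bisection.
\item Near \(b\downarrow0\), I need an analytic tail bound: \(L_\tau\le\Pr(a<\theta\le b\mid x^*)\to0\) while \(R_\tau\to1-c_S\). This yields an explicit \(b_\epsilon\) below which \(F_\tau>0\).
\item Near \(b\uparrow1\), a possible right active component must be handled by a right-tail bound. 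Here \(R_\tau\to -c_S\) and \(L_\tau\ge0\), so \(F_\tau<0\) on some explicit \((b^\epsilon,1)\).
\item The activation boundaries \(\bar b\) and any right threshold need care, because \(y_L\) and \(y_H\) merge there. I would bound \(L_\tau\) by \(\Pr(y\in(y_L,y_H))\) times \(\sup Q_b\) to avoid resolving the band exactly.
\end{itemize}

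\textbf{Global incentives and persistence.} At each active root, evaluate \(\gamma,\ell,r,n_a,n_b,n_0\) on the root enclosure and certify \(a<\gamma<b\), \(0<\ell<r\), \eqref{eq:L} and \eqref{eq:GM}. Lemma~\ref{lem:global} then gives global best responses. At \(b_0\), Proposition~\ref{prop:unique}'s monotonicity argument applies. Conversely, any equilibrium cutoff must solve \(F_\tau=0\), so exhaustion yields exactly three.

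For persistence, every certified statement is a strict inequality between functions that are continuous in parameters and \(b\): sign changes, \(F_\tau'\neq0\), the inequalities in \eqref{eq:L} and \eqref{eq:GM}, and the tail bounds. Strict inequalities on compact enclosures survive small parameter perturbations. The one exception is the noncompact tails, where I would make \(b_\epsilon\) and \(b^\epsilon\) uniform over a small parameter box, or more simply run the whole certificate on a parameter box.

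\textbf{Main obstacle.} I expect the hardest step to be exhaustion near the activation boundary \(\bar b\) and in the right tail. There \(F_\tau\) involves implicitly defined band endpoints with singular derivatives, and \(R_\tau\) is small near \(b_0\), so the margins are thin. My first approach would be to use the crude sup bound on \(L_\tau\) in a shrinking neighborhood of activation, combined with fine bisection elsewhere.
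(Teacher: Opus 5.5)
Your plan is essentially the paper's proof: sign-certified brackets for the three roots, a finite outward-rounded cover of \((0,1)\) with an analytic left tail, derivative-signed segments around the active roots, Lemma~\ref{lem:global} on the active brackets, and persistence from the strictness of every inequality in the fixed cover. Your left-tail bound \(L_\tau\le\Pr(a<\theta\le b\mid x^*)\) is exactly the paper's ``success below \(a\)'' estimate, and the right tail is not the obstacle you expect: \(L_\tau\ge0\) and the strict monotonicity of \(R_\tau\) give \(F_\tau\le R_\tau\le R_\tau(0.090)<0\) on all of \([0.090,1)\) without resolving any right activation band, which is how the paper's ledger handles it.
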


\begin{table}[t]
\centering
\caption{Outward-rounded endpoint signs and uniform bracket margins}
\label{tab:certificate}
\small
\begin{tabular}{lcccc}
\toprule
\(b\)-bracket & \(F_\tau\) signs & \(H_\tau-c_L\) & (L) margin & (GM) margin\\
\midrule
\([0.0035,0.0040]\) & \((+,-)\) & \(>0.123\) & \(>0.240\) & \(>0.220\)\\
\([0.0435,0.0440]\) & \((-,+)\) & \(>0.0064\) & \(>0.061\) & \(>0.680\)\\
\([0.0805,0.0815]\) & \((+,-)\) & \(<-0.032\) & --- & ---\\
\bottomrule
\end{tabular}
\end{table}

\begin{proof}
Ball arithmetic gives the six endpoint signs in Table~\ref{tab:certificate}.  Continuity supplies a root in each disjoint bracket.  On the first two brackets, defense is uniformly active and the certificate verifies \(a<\gamma<b\), \(0<\ell<r\), \eqref{eq:L}, and \eqref{eq:GM} uniformly.  Lemma~\ref{lem:global} therefore turns both roots into equilibria.  On the third bracket, defense is uniformly inactive and the success probability is strictly decreasing.  Exact aggregation and the defender's interval best response complete the equilibrium construction.

For exactness, a second certificate exhausts \(b\in(0,1)\) with seven pieces (reported with their interval margins in the certificate ledger below): a positive left tail; a strictly decreasing segment containing the first root; a strictly negative trough; a strictly increasing segment containing the second root; a positive activation-transition segment; a strictly decreasing inactive segment containing the third root; and a negative right tail.  Hence \(F_\tau\) has no other zeros.  Every equilibrium in the stated class induces a unique \(b\in(0,1)\) and hence a zero of \(F_\tau\).  Because \(T_\beta\) is a bijection from \((0,1)\) onto \(\R\), the parameterization omits no cutoff.  Therefore there are no further equilibria in the stated class.

All tail, sign, derivative, regime, geometry, and incentive inequalities in the fixed finite cover are strict and continuous in the five primitives in \eqref{eq:parameters}, holding \((\tau,m_0)=(0.001,0)\) fixed.  The same cover therefore remains valid after sufficiently small perturbations, proving the open-neighborhood statement.
\end{proof}

The isolated roots are approximately
\[
 b_1=0.0037124555,
 \qquad b_2=0.0437670125,
 \qquad b_3=0.0809911822.
\]
The first two are not replicas of the usual inactive root.  They lie inside the left defense component, on opposite sides of the intervention-loss hump. The third is the perturbed proper-prior analogue of \(1-c_S\).

\begin{table}[t]
\centering
\caption{Approximate equilibrium objects at the certified vector}
\label{tab:outcomes}
\small
\begin{tabular}{lcccc}
\toprule
 & \(b\) & \(x^*\) & pivotal states \([a,b]\) & defense signals \([y_L,y_H]\)\\
\midrule
1 & 0.0037125 & -0.942807 & [-0.619649,\ 0.003712] & [-0.486569,\ -0.129398]\\
2 & 0.0437670 & -0.560297 & [-0.436688,\ 0.043767] & [-0.235606,\ -0.157334]\\
3 & 0.0809912 & -0.413430 & [-0.360029,\ 0.080991] & inactive\\
\bottomrule
\end{tabular}
\end{table}

The first active equilibrium combines the lowest attack cutoff with the widest defense-signal band.  At the second, small traders attack at higher signals while the defense band narrows.  At the third, the defense band disappears and the attack cutoff rises again.  The two active crossings thus represent distinct self-consistent assessments of protection---more expected protection with less aggressive small traders, and a narrower defense response with more aggressive small traders---rather than duplicate numerical roots.

\section{Interpretation and scope}

The result depends on three distinct forces.  Strategic complementarity among small traders creates the usual cutoff logic.  The defender's finite capacity makes intervention pivotal only on a state interval.  His private signal then turns that state interval into an intermediate-signal defense band.  Because his choice is contemporaneous and hidden, the band changes payoffs but not beliefs.  It is this payoff effect---not policy signalling and not an exogenous perturbation of higher-order beliefs---that bends the residual.

The cost comparison is equally important.  If defense is too expensive, Proposition~\ref{prop:unique} applies and the benchmark survives.  If defense is active at pessimistic cutoffs but inactive at \(b_0\), the intervention loss can generate a pair of additional equilibria.  This gives a simple economic interpretation of the three roots: limited and selectively deployed defense need not stabilize expectations; it can create two self-consistent views about how much protection the market will face.

The currency-attack labels are not essential to this logic.  The analytical object is a decreasing benchmark residual minus an intervention loss supported on an endogenous activation component.  Whenever hidden action is valuable only for intermediate fundamentals, the loss can vanish at the component's boundaries and peak inside it.  Proposition~\ref{prop:hump} states the resulting overshoot test explicitly; the Gaussian structure makes the test finite and computable.  Applications outside currency crises would require their own verification of the global-incentive and activation conditions.

The word ``local'' is substantive.  The exact-count neighborhood in Proposition~\ref{prop:main} is topological and unquantified.  To report a numerical scale without conflating existence and exactness, a separate certificate holds \((\tau,m_0,\alpha,\beta,\lambda)\) fixed and establishes at least three qualified equilibria throughout the independent rectangle
\[
 (c_L,c_S)\in[0.70,0.72]\times[0.909,0.929].
\]
This rectangle does not carry an exact-count claim.

The paper does not claim a general characterization of every Bayesian equilibrium.  Proposition~\ref{prop:main} is complete only within symmetric monotone small-trader cutoffs.  Asymmetric and non-cutoff equilibria remain outside its scope.  The parameter vector is constructed rather than calibrated, and the result is not evidence about the empirical frequency of currency crises.  Its purpose is sharper: to identify a transparent intervention technology under which the familiar global-game selection argument changes, to state the condition responsible for that change, and to verify one locally persistent exact count without relying on a plot or a floating-point grid.

\appendix
\section{Exhaustive certificate ledger}\label{app:ledger}

Table~\ref{tab:ledger} records the complete cover used to exclude every other zero of \(F_\tau\).  On active cells, the program differentiates the threshold equations and the defender's band equation implicitly and then applies Leibniz' rule to \eqref{eq:loss}.  Every displayed derivative or residual bound is an outward-rounded Arb enclosure.  The two active root brackets also satisfy Lemma~\ref{lem:global} uniformly, as reported in Table~\ref{tab:certificate}.

\begin{table}[!htbp]
\centering
\caption{Seven-piece exhaustive cover of \(b\in(0,1)\)}
\label{tab:ledger}
\small
\begin{tabular}{>{\raggedright\arraybackslash}p{0.20\textwidth}
                >{\raggedright\arraybackslash}p{0.39\textwidth}
                >{\raggedright\arraybackslash}p{0.29\textwidth}}
\toprule
Cutoff segment & Certified fact & Outward-rounded margin\\
\midrule
\((0,0.0001]\) & \(F_\tau>0\) from success below \(a\) & at least \(0.0270838\)\\
\([0.0001,0.018]\) & active and \(F_\tau'<0\); contains \(b_1\) & 204 cells; \(-F_\tau'>0.0818207\)\\
\([0.018,0.024]\) & \(F_\tau<0\) & 120 cells; \(-F_\tau>0.0209890\)\\
\([0.024,0.048]\) & active and \(F_\tau'>0\); contains \(b_2\) & 240 cells; \(F_\tau'>0.143494\)\\
\([0.048,0.050]\) & \(F_\tau>0\), including the regime transition & 200 cells; \(F_\tau>0.0137648\)\\
\([0.050,0.090]\) & inactive and benchmark residual decreasing; contains \(b_3\) & 40 cells; \(c_L-H_\tau>0.0003563\)\\
\([0.090,1)\) & \(F_\tau\leq R_\tau<0\) & \(R_\tau(0.090)<-0.0090084\)\\
\bottomrule
\end{tabular}
\end{table}

The sign-changing brackets are \([0.0035,0.0040]\), \([0.0435,0.0440]\), and \([0.0805,0.0815]\). The computations use Python 3.12.13, \texttt{python-flint} 0.8.0, FLINT 3.3.1, and 70 decimal digits of working precision. Source, pinned requirements, complete transcripts, and a checksum manifest are included in the \texttt{anc/} directory of the arXiv source package.

\section{Ancillary certificates and reproduction}\label{app:anc}

The \texttt{anc/} directory contains the complete machine-readable supplement. Each proof program has a matching archived transcript, and \path{anc/SHA256SUMS.txt} authenticates every ancillary file.

\paragraph{Core equilibrium certificate.} \path{yosifov_newvector_core_certificate.py} verifies the three root brackets, defender regimes, geometry, and uniform global-incentive conditions at the certified vector.

\paragraph{Exact-count certificate.} \path{yosifov_newvector_exactly_three_certificate.py} combines analytic tails with the seven-piece outward-rounded cover in Table~\ref{tab:ledger} to exclude every other zero on \(b\in(0,1)\).

\paragraph{Cost-rectangle certificate.} \path{yosifov_newvector_cost_rectangle_certificate.py} proves at least three qualified equilibria throughout \([0.70,0.72]\times[0.909,0.929]\) in \((c_L,c_S)\), holding the other primitives fixed. It does not assert exactness throughout that rectangle.

\paragraph{Overshoot and reported-value certificates.} \path{yosifov_newvector_overshoot_bound_certificate.py} proves the two-bin finite-CDF overshoot at \(b=0.020\) without numerical quadrature. \path{yosifov_newvector_reported_values_certificate.py} encloses the cost thresholds, activation boundary, direct-loss corroboration, and the approximate equilibrium objects reported in the text.

\paragraph{Reproduction.} After verifying the checksum manifest, install the pinned dependency in \path{anc/requirements-yosifov-newvector-certificate.txt} and run each certificate from the \texttt{anc/} directory. The archived transcripts were generated at 70 decimal digits; a successful rerun ends with the corresponding \texttt{PASSED} message and reproduces its archived transcript byte for byte.

\end{document}